\theoremstyle{plain}
\newtheorem{thm}{Theorem}
\newtheorem{defn}{Definition} 
\newtheorem{proposition}[thm]{Proposition}
\title{Every Classical Sampling Circuit is a Quantum Sampling Circuit} 
\author{Steven Herbert$^*$}
\address{Cambridge  Quantum  Computing  Ltd,  9a  Bridge  Street,  Cambridge,  CB2  1UB,  UK \\ Department of Computer Science and Technology, University of Cambridge, UK}
\begin{document}

\onecolumn

\begin{abstract}
\noindent 
This note introduces ``Q-marginals'', which are quantum states encoding some probability distribution in a manner suitable for use in Quantum Monte Carlo Integration (QMCI), and shows that these can be prepared directly from a classical circuit sampling for the probability distribution of interest. This result is important as the quantum advantage in Monte Carlo integration is in the form of a reduction in the number of uses of a quantum state encoding the probability distribution (in QMCI) relative to the number of samples that would be required in classical MCI -- hence it only translates into a \textit{computational} advantage if the number of operations required to prepare this quantum state encoding the probability distribution is comparable to the number of operations required to generate a classical sample (as the Q-marginal construction achieves).
\end{abstract}

\twocolumn[{%
 \centering
 \maketitle
 
 \vspace{-1cm}
}]

\section{Introduction}

Loading classical data on to a quantum computer is one of the major remaining barriers to using quantum computers to give computational advantage in data science tasks. These data could be in the form of ``raw'' data, to be stored in a quantum memory (\textit{e.g.}, ``QRAM''  \cite{qRAM}), or in the form of some probability distribution to be sampled from. In this note we focus on the latter, specifically, the preparation of states that sample from some probability distribution of interest when measured in the computational basis. That is, states of the form:
\begin{equation}
\ket{p} = \sum_{i \in \{0,1\}^n } \phi_i \sqrt{p_i} \ket{i}
\end{equation}
where $\{ p_i \}$ is some discrete $2^n$-point probability distribution, whose support is (implicitly) mapped to the binary strings $0^n, \dots, 1^n$, and $\phi_i$ denotes a relative phase, \textit{i.e.}, $\phi_i$ is a complex number such that $|\phi_i| = 1$, and so the same distribution is sampled from regardless of the values of these relative phases. We may further define $P$ to be a circuit that prepares $\ket{p}$ from $\ket{0^n}$, that is $\ket{p}= P \ket{0^{n}}$. States of the form $\ket{p}$ are sometimes referred to as ``Q-samples'' as, even though by specifying the relative phases we can see that any quantum state can be expressed in this manner, the construction such that computational basis measurements \textit{sample} from the probability distribution $\{ p_i \}$ is the pertinent feature. Circuits preparing Q-samples are important for a variety of application in quantum data-science, and in particular we are interested in preparing Q-samples for Quantum Monte Carlo Integration \cite{MontanaroMC, herbert2021quantum}\makeatletter{\renewcommand*{\@makefnmark}{}
\footnotetext{Contact: Steven.Herbert@cambridgequantum.com}\makeatother}

The quantum advantage in Monte Carlo integration is essentially a query (or sample) complexity advantage: to converge to some specified mean squared error, QMCI requires (asymptotically) quadratically fewer uses of the circuit $P$ than the number of samples from $\{ p_i \}$ that are required classically. However, this leaves open the question of how to construct such a circuit, $P$. For many years, it was supposed that the Grover Rudolph method \cite{grover2002creating} could be used to prepare Q-samples for a number of commonly-used probability distributions, but recently it was shown that this approach is not sufficient to uphold the quantum advantage when a full audit of the required operations is undertaken \cite{HerbertGR}. 

In this note we seek to put classical and quantum Monte Carlo integration on an equal footing by first considering the computational cost of generating a classical sample. In particular we use the fact that even in classical Monte Carlo (executed on a digital computer) a sample from $\{ p_i \}$ still requires one use of a circuit, namely the ``classical sampling circuit''.

\begin{defn}
A classical sampling circuit for some probability distribution, $\{ p_i \} $, is a map $f: \{0,1\}^m \to \{0,1\}^n$, which is such that $\frac{1}{2^m} |\{x : f(x) = X \}| = p_X$. That is, when the input is a uniformly random bitstring, then we sample from $\{ p_i \}$.
\end{defn}
This is a completely general way to describe the (digital) computational process of classical sampling -- even if the output sample is interpreted as a sample from a multivariate distribution. Furthermore, even when sampling from a continuous and / or infinitely-supported probability distribution, if a digital computer is used then suitable quantisation and / or truncation will always be required, and this too is captured in the notion of a classical sampling circuit. This definition is also such that, for example, a neural network trained to sample from some desired distribution given a uniformly random input, is an instance of a ``classical sampling circuit''. In order to connect classical sampling circuits to quantum sampling, we need also to introduce ``Q-marginals'', which are closely related to Q-samples:
\begin{defn}
A ``Q-marginal'' for the probability distribution $\{ p_i \}$ is a quantum state of the form:
\begin{equation}
\ket{\tilde{p}} = \sum_{i \in \{0,1\}^n} \, \sum_{\{ j \in \{0,1 \}^m : f(j) = i \} } a_{j} \ket{j} \! \ket{i} 
\end{equation}
where for all $i$ we have $\sum_{\{ j \in \{0,1 \}^m : f(j) = i \} } | a_{j}|^2 = p_i$ (\textit{i.e.} for any suitable function, $f$).
\end{defn}
That is, we group together all of the terms in the first register that are entangled with the \textit{same} value of $i$, and we can see that measuring the second register does indeed sample from $p_i$. By deploying two registers in this way, the probability distribution of interest is now a \textit{marginal} distribution of some larger joint distribution -- hence the name. The application of QMCI naturally handles the situation where the Monte Carlo integral concerns the marginal distribution of some joint multivariate distribution \cite[Section~II]{herbert2021quantum}, and hence we can see that preparing Q-marginals rather than Q-samples is sufficient for QMCI.

The main result of this note is a proof that a circuit preparing Q-marginals can always be constructed from the corresponding classical sampling circuit.

\section{Q-Marginals from Classical Samples}

\begin{proposition}
Let $U$ be a reversible form of some classical sampling circuit (in the sense of Definition 1) for some probability distribution, $\{p_i \}$, then we can construct a circuit, $U'$ to prepare a Q-marginal of $\{ p_i \}$ using only $U$ and $m$ Hadamard gates in a single layer.
\end{proposition}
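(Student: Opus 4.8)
The plan is to use the most direct construction imaginable: set $U' := U\,(H^{\otimes m}\otimes I)$, where the $m$ Hadamards act on the $m$-qubit input register of $U$ and $I$ is the identity on the $n$-qubit output register (and on any ancilla register that the reversible circuit $U$ carries, all ancillas initialised to $\ket{0}$), and then apply $U'$ to $\ket{0^m}\ket{0^n}$. The single parallel layer of Hadamards prepares a uniform superposition over the input register, $U$ then writes $f(j)$ into the output register, and I expect the remainder of the proof to be essentially a one-line check that the resulting state has the form demanded by Definition 3, with all coefficients $a_j$ equal.

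First I would make precise what ``a reversible form of a classical sampling circuit'' means: by the standard (Bennett-style) construction, any classical circuit computing $f$ of size $s$ can be turned into a reversible circuit $U$ of size $O(s)$ acting as $U\ket{j}\ket{0^n}\ket{0^a} = \ket{j}\ket{f(j)}\ket{0^a}$, i.e. it computes $f$ into the output register and returns the ancilla workspace to $\ket{0^a}$, so the ancillas factor out of the global state. Next I would track the state through $U'$:
\[
\ket{0^m}\ket{0^n}\ \xrightarrow{\,H^{\otimes m}\otimes I\,}\ \frac{1}{\sqrt{2^m}}\sum_{j\in\{0,1\}^m}\ket{j}\ket{0^n}\ \xrightarrow{\ U\ }\ \frac{1}{\sqrt{2^m}}\sum_{j\in\{0,1\}^m}\ket{j}\ket{f(j)},
\]
where throughout the last two steps the (suppressed) ancilla register stays in $\ket{0^a}$.

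Then I would verify that this final state is a Q-marginal for $\{p_i\}$. Re-grouping the sum according to the value $i = f(j)$ of the output register gives $\frac{1}{\sqrt{2^m}}\sum_{i\in\{0,1\}^n}\sum_{\{j : f(j)=i\}}\ket{j}\ket{i}$, which is exactly the form of Definition 3 with $a_j = 1/\sqrt{2^m}$ for every $j \in \{0,1\}^m$. The normalisation requirement of Definition 3 then reads
\[
\sum_{\{j : f(j)=i\}}|a_j|^2 \;=\; \frac{|\{j : f(j)=i\}|}{2^m} \;=\; p_i,
\]
the last equality being precisely the defining property of a classical sampling circuit (Definition 1). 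Hence $U'\ket{0^m}\ket{0^n}$ is a Q-marginal, and by construction $U'$ consists of $U$ together with a single layer of $m$ Hadamard gates, which establishes the proposition.

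The step I would be most careful about is the ancilla bookkeeping: the clean ``compute, copy and uncompute'' behaviour of $U$ is what guarantees that no garbage register remains entangled with $j$, and this is what lets the final state be written in the exact two-register form of Definition 3 (with garbage qubits present one would still sample correctly from $\{p_i\}$, but the state would not literally match the definition). This is exactly why the hypothesis asks for a \emph{reversible} implementation of the classical circuit, and --- as the introduction stresses --- because this reversible form has size comparable to the original classical circuit, $U'$ costs no more (up to the $m$ Hadamards) than drawing one classical sample. I would also remark that the uniform choice $a_j \equiv 2^{-m/2}$ is not forced: pre-composing $U$ with any diagonal unitary on the input register alters the phases $\phi_j$ while leaving the moduli $|a_j|$ fixed, giving a family of valid Q-marginals, but the bare construction above already suffices and is the cheapest.
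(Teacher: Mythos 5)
Your proposal is correct and follows essentially the same route as the paper: defining $U' = U(H^{\otimes m}\otimes I)$, tracking the state $\ket{0^{m+n}}$ through the Hadamard layer and $U$, regrouping the sum by the value of $f(j)$, and invoking Definition 1 to verify the Q-marginal condition with $a_j = 2^{-m/2}$. Your additional care over the ancilla register and uncomputation is a sensible elaboration of what the paper packages into the assumption that $U$ acts as $\ket{x}\ket{y}\mapsto\ket{x}\ket{y\oplus f(x)}$ (note only that the Q-marginal is Definition~2 in the paper, not Definition~3).
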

\begin{proof}
We construct $U'$ such that:
\begin{equation}
U' =   U (H^{\otimes m} \otimes I_2^{\otimes n})
\end{equation}
where $H$ is the Hadamard gate, and $I_2$ is the $2 \times 2$ identity, as shown in Fig.~\ref{f1}. Using the fact that the reversible form of the sampling circuit is such that $\ket{x} \! \ket{y} \xrightarrow[]{U} \ket{x}  \! \ket{ y \oplus f(x)}$ (for any computational basis state, $\ket{x}$), we thus have that:
\begin{align}
\ket{0^{m+n}} \xrightarrow[]{U'} & \frac{1}{\sqrt{2^m}}  \sum_{j \in \{ 0,1 \}^m }\ket{j} \! \ket{f(j)} \nonumber \\
= &  \sum_{i \in \{ 0,1 \}^n }    \sum_{ \{ j: f(j) = i \} }  \frac{1}{\sqrt{2^m}} \ket{j} \! \ket{i}
\end{align}
which from Definition~2 is a Q-marginal with $a_j = \frac{1}{\sqrt{2^m}} $ for all $j$, and from Definition~1 we have that $ \frac{1}{2^m} | \{ j: f(j) = i \} | = p_i$, and so this is indeed a Q-marginal of $\{ p_i \}$.
\end{proof}

 \begin{figure}[t!] 

\centering

\begin{tabular}{ c }
\begin{tikzpicture}
    \begin{yquant}
    qubit {$\ket{x}$} r[1];
    qubit {$\ket{y}$} q[1]; 
    slash q[0];
    slash r[0];
    box {$U$} (q[0],r[0]) ;
    slash q[0];
    slash r[0];
    output {$\ket{x}$} r[0];
    output {$\ket{y \oplus f(x) }$} q[0]; 
  \end{yquant}
  \end{tikzpicture}
  \\
  (a) \\
  { } \\
  \begin{tikzpicture}
    \begin{yquant}
    qubit {$\ket{0^m}$} r;
    qubit {$\ket{0^n}$} r[+1]; 
    slash r[0];
    slash r[1];
    box {$H^{\otimes m}$} r[0] ;
    box {$U$} (r[0],r[1]) ;
    slash r[0];
    slash r[1];
    output {$\frac{1}{\sqrt{2^m}}\sum_{j \in \{0,1\}^m} \ket{j} \ket{f(j)}$}  (r);
  \end{yquant}
  \end{tikzpicture}\\
   (b) \\
  \end{tabular}
\captionsetup{width=.9\linewidth}
	\caption{(a) Reversible sampling circuit; (b) the circuit, $U'$, to prepare Q-marginals, shown here acting on the input state $\ket{0^{m+n}}$.}
\label{f1}
\end{figure}

\section{Discussion}

The result is an extremely simple one, and is essentially a corollary of the basic principle that it is always possible to construct a reversible version of any classical circuit. Indeed, Proposition~1 says that Q-marginals can be prepared by applying a reversible circuit to a circuit preparing Q-samples, and in particular if the Q-sampling circuit prepares the uniform distribution (\textit{i.e.} it is a single bank of Hadamard gates) then a reversible version of a classical sampling circuit can be used to prepare Q-marginals of the same distribution. Nevertheless, there is merit in explicitly connecting this to the fact that Q-marginals (as opposed to Q-samples) suffice for QMCI. For this construction shows that a suitable quantum state encoding \emph{any} probability distribution can be constructed directly from the classical circuit required to sample from that distribution, and hence that there is \emph{always} an actual \emph{computational} advantage when performing QMCI. We no longer need to appeal to vague arguments about the complexity of preparing Q-samples in order to make this claim. Notably, we also inherit any quantisation and truncation that would be made to approximately sample from a continuous probability distribution on a classical computer, and hence there is no risk of hidden complexity in the process of preparing quantum samples to the same accuracy as the corresponding classical samples (as is the central flaw in using the Grover-Rudolph algorithm for Q-sample preparation \cite{HerbertGR}).

An important question to ask next, is when is such a construction useful in practice? In cases where we already have effective classical sampling algorithms, then using the reversible circuit thereof (as a starting point at least) may well prove to be a fruitful approach. Another question is how many of the operations performed in $U$ can be pushed into classical post-processing, using a Fourier series decomposition of the Monte Carlo integral \cite{herbert2021quantum}.


\section*{Acknowledgement}

Thanks to Ross Duncan and Cristina Cirstoiu for reviewing this note.

\balance


\end{document}